\documentclass[AMA,Times1COL]{WileyNJDv5}

\usepackage{algpseudocode}
\usepackage{algorithm}
\usepackage{amssymb} 
\usepackage{booktabs} 
\usepackage{amsmath} 
\usepackage{xspace} 
\usepackage{url} 
\usepackage{subcaption} 
\usepackage{xcolor}

\newcommand{\thp}{\tau^{HP}}
\newcommand{\tpc}{\tau^{PC}}
\newcommand{\thpi}{\tau^{HP}_i}

\newcommand{\tpci}{\tau^{PC}_i}

\newcommand{\shi}{\sigma^H_i}

\newcommand{\spi}{\sigma^P_i}

\newcommand{\Zhpi}{Z^{HP}_i}

\newcommand{\ZIi}{Z^I_i}
\newcommand{\fI}{f_\textsf{I}}
\newcommand{\fH}{f_\textsf{H}}
\newcommand{\SH}{S_\textsf{H}}

\newcommand{\fprog}{f_{\textsf{prog}}}
\newcommand{\Sprog}{S_{\textsf{prog}}}

\newcommand{\Exp}{\textsf{Exp}}
\newcommand{\Wei}{\textsf{Wei}}

\newcommand{\Ber}{\textsf{Ber}}
\newcommand{\Ga}{\textsf{Ga}}
\newcommand{\Be}{\textsf{Be}}

\newcommand{\trfh}{\textsf{Trunc-F}_\textsf{H}}
\newcommand{\FH}{\textsf{F}_\textsf{H}}
\newcommand{\FP}{\textsf{F}_\textsf{P}}
\newcommand{\F}{\textsf{F}}
\newcommand{\Fprog}{\textsf{F}_\textsf{prog}}

\newcommand{\muprog}{\mu_\textsf{prog}}

\newcommand{\Unif}{\textsf{U}}
\newcommand{\iton}{i=1,\dots,n}

\newcommand{\lH}{\lambda_\textsf{H}}
\newcommand{\lprog}{\lambda_{\textsf{prog}}}
\newcommand{\aH}{\alpha_\textsf{H}}
\newcommand{\aprog}{\alpha_{\textsf{prog}}}

\newcommand{\Y}{\mathbf{Y}}
\newcommand{\Z}{\mathbf{Z}}
\newcommand{\X}{\mathbf{X}}
\newcommand{\x}{\mathbf{x}}
\newcommand{\BCSC}{Breast Cancer Surveillance Consortium\xspace}
\newcommand{\MCMC}{Markov chain Monte Carlo\xspace}

\usepackage{tikz}
\usetikzlibrary{
	arrows.meta,
	calc,
	datavisualization.formats.functions,
	decorations.text,
	plotmarks,
	decorations.markings,
	decorations.pathreplacing,
        trees,
        shapes.multipart}
\tikzset{ shorten <>/.style={ shorten >=#1, shorten <=#1 } }

\articletype{Article Type}%

\received{Date Month Year}
\revised{Date Month Year}
\accepted{Date Month Year}
\journal{Statistics in Medicine}
\volume{00}
\copyyear{2024}
\startpage{1}

\raggedbottom

\begin{document}

\title{A Bayesian Approach for Fitting Semi-Markov Mixture Models of Cancer Latency to Individual-level Data}

\author[1]{Rapha\"{e}l N. Morsomme}

\author[2]{Shannon T. Holloway}

\author[2,3]{Marc D. Ryser}

\author[1]{Jason Xu}

\authormark{MORSOMME \textsc{et al.}}
\titlemark{A Bayesian Approach for Fitting Semi-Markov Mixture Models of Cancer Latency to Individual-level Data}

\address[1]{\orgdiv{Department of Statistical Science}, \orgname{Duke University}, \orgaddress{\state{North Carolina}, \country{United States}}}

\address[2]{\orgdiv{Department of Population Health Sciences}, \orgname{Duke University}, \orgaddress{\state{North Carolina}, \country{United States}}}

\address[3]{\orgdiv{Department of Mathematics}, \orgname{Duke University}, \orgaddress{\state{North Carolina}, \country{United States}}}

\corres{Corresponding authors Jason Xu and Marc D. Ryser (joint senior authors). \email{jqxu@ucla.edu, marc.ryser@duke.edu }}

\presentaddress{This is sample for present address text this is sample for present address text.}


\abstract[Abstract]{
Multi-state models of cancer natural history are widely used for designing and evaluating cancer early detection strategies. Calibrating such models against longitudinal data from screened cohorts is challenging, especially when fitting non-Markovian mixture models against individual-level data. Here, we consider a family of semi-Markov mixture models of cancer natural history and introduce an efficient data-augmented Markov chain Monte Carlo sampling algorithm for fitting these models to individual-level screening and cancer diagnosis histories. Our fully Bayesian approach supports rigorous uncertainty quantification and model selection through leave-one-out cross-validation, and it enables the estimation of screening-related overdiagnosis rates. We demonstrate the effectiveness of our approach using simulated data, showing that the sampling algorithm efficiently explores the joint posterior distribution of model parameters and latent variables. Finally, we apply our method to data from the US Breast Cancer Surveillance Consortium and estimate the extent of breast cancer overdiagnosis associated with mammography screening. The sampler and model comparison method are available in the \texttt{R} package \texttt{baclava}.
}

\keywords{
cancer natural history, indolence, overdiagnosis, data augmentation, Markov chain Monte Carlo.
}

\jnlcitation{\cname{%
\author{Taylor M.},
\author{Lauritzen P},
\author{Erath C}, and
\author{Mittal R}}.
\ctitle{On simplifying ‘incremental remap’-based transport schemes.} \cjournal{\it J Comput Phys.} \cvol{2021;00(00):1--18}.}

\maketitle

\renewcommand\thefootnote{}
\footnotetext{\textbf{Abbreviations:} MCMC, Markov chain Monte Carlo; MH, Metropolis-Hastings; DA-MCMC, data-augmented Markov chain Monte Carlo; BCSC, Breast Cancer Surveillance Consortium; CV, cross-validation; ALOOCV, approximate leave-one-out cross-validation.}

\renewcommand\thefootnote{\fnsymbol{footnote}}
\setcounter{footnote}{1}

\section{Introduction}
Cancer screening aims to detect early markers of cancer before symptoms arise, facilitating timely intervention for improved patient outcomes. The efficacy of screening depends on the duration of the pre-clinical phase, during which a tumor is asymptomatic but screen-detectable. If the tumor latency period is too short, even frequent screening does not significantly advance the time of diagnosis. Conversely, a long latency period can result in \textit{overdiagnosis} and \textit{overtreatment}---the diagnosis and unnecessary treatment of tumors that would not have caused symptoms or other harm in the person's remaining lifetime \citep{welch2010overdiagnosis, duffy2013overdiagnosis}. Understanding the duration of tumor latency is thus crucial for assessing the benefits and harms of interventions that follow positive screening tests.

Estimating tumor latency from cancer incidence data is challenging because the onset of pre-clinical disease in healthy individuals is not directly observable. In unscreened populations, only the time of clinical diagnosis, which coincides with the end of the tumor latency period, is known. In this case, we cannot distinguish the time intervals before and after pre-clinical onset. In other words, the period of tumor latency is not identifiable. Fortunately, screening methods that utilize biological or imaging markers offer a selective insight into the latent disease state, thus enabling the identification of tumor latency.

There is a rich history on estimating cancer latency from screening data using multi-state models that dates back to the 1980s. These models range from simple three-state models \citep{day1984simplified, shen1999parametric}---in which individuals are \textit{healthy} (disease-free), \textit{pre-clinical} (with an asymptomatic screen-detectable cancer) or \textit{clinical} (with a symptomatic cancer)---to more complex models that seek to emulate the biological complexity of cancer progression \citep{rutter2010evidence, seigneurin2011overdiagnosis}. Typically, the underlying stochastic processes of such multi-state models are assumed to be Markovian for mathematical tractability \citep{lange2015joint}. Under this assumption, the unobserved onset times can be integrated out analytically, giving a closed-form expression for the likelihood of  observed data. This enables maximum likelihood estimation \citep{olsen2006overdiagnosis, wu2018overdiagnosis} and Bayesian inference \citep{duffy2005overdiagnosis, ryser2022estimation}. However, relaxing the Markov assumption often leads to a departure from the simpler mathematical structures, and more complex non-Markovian models are usually calibrated using simulation-based or ``likelihood-free'' approximation methods, such as approximate Bayesian computation \citep{rutter2009bayesian, alagoz2018university, bondi2023approximate} or pseudo-likelihood methods \citep{moore2001pseudo}.

In this paper, we present a method for exact Bayesian inference of cancer natural history using a flexible family of semi-Markov mixture models that account for both progressive and non-progressive (indolent) disease states. The proposed approach accommodates individual-level screening and cancer diagnosis histories, and, because it does not rely on model-based simulation, is scalable to large cohorts of screened individuals. By relaxing the Markov assumption, we may then flexibly specify the hazard function, which in turn leads to improved model realism \citep{hsieh2002assessing, cheung2022multistate}. Since the semi-Markov models do not admit a closed form expression for the observed data likelihood, we conduct inference using data augmentation \citep{tanner1987calculation}. This strategy augments the observed data with latent variables in such a way that the resulting \textit{complete-data} likelihood is amenable to iterative sampling. Executing this idea in the context of cancer natural history modeling is nontrivial because the number of latent variables is proportional to the number of individuals in the study, and cancer screening cohorts often comprise tens or hundreds of thousands of individuals. The resulting joint posterior distribution is complex and high-dimensional, rendering standard rejection or importance sampling approaches impractical. Moreover, gradient-based approaches, such as Hamiltonian Monte Carlo, which are well-suited for high-dimensional problems \citep{girolami2011riemann}, are not applicable to our family of models due to the presence of discrete latent variables, and of discontinuities in the likelihood from the discrete screening outcomes.

Instead, our DA-MCMC sampler efficiently explores the joint posterior distribution of the model parameters together with the latent variables, enjoying sound theoretical guarantees. This fully Bayesian approach automatically accounts for the uncertainty in the unobserved onset time of cancers, and allows the estimation of any relevant function of the parameters, e.g.\ the overdiagnosis rate, through its posterior distribution. Moreover, because the complete-data likelihood has a tractable closed form, we can rigorously compare models using an extension of ALOOCV \citep{gelfand1992model, vehtari2017practical} to latent models.

The paper is structured as follows. In Section~\ref{sec:model}, we introduce the underlying family of semi-Markov mixture models of cancer natural history; in Section~\ref{sec:method}, we describe the inferential framework; and, in Section~\ref{sec:mcmc}, we detail the construction of the DA-MCMC algorithm. We examine the mixing properties of the sampler through simulation studies in Section~\ref{sec:sim} and apply it to estimate the extent of overdiagnosis in a real-world cohort of women undergoing mammography screening in Section~\ref{sec:BCSC}. The DA-MCMC sampler is publicly available in the \verb|R| package implementation \verb|baclava| on CRAN, and \verb|R| code for reproducing the results present in this paper is available on GitHub (\url{https://github.com/rmorsomme/baclava-manuscript}).

\section{Semi-Markov mixture model of cancer progression} \label{sec:model}

\begin{figure}
\centering
    \begin{subfigure}{\textwidth}   
    \centering
        \begin{tikzpicture}[every node/.style={scale=1}, every text node part/.style={align=center}]
            \draw (0.0, 0.0) node[align=center, draw, rectangle, minimum width = 2.5cm, minimum height=0.75cm] (1) {Healthy ($H$)};
            \draw (4.0, 0.0) node[align=center, draw, rectangle, minimum width = 2.5cm, minimum height=2.75cm] (2) {
            Pre-clinical \\ cancer ($P$) \\ \\
            \\
            \\
            };
            \draw (8.0, 0.0) node[align=center, draw, rectangle, minimum width =2.5cm, minimum height=1.2cm] (4) {Clinical \\ cancer ($C$)};

            \draw (4, 0.0) node[align=center, draw, rectangle, minimum width = 2cm, minimum height=0.5cm] (3) {progressive};
            \draw (4, -0.8) node[align=center, draw, rectangle, minimum width = 2cm, minimum height=0.5cm] {indolent};
        
        \draw [thick,->,shorten >=1mm] (1) to (2) node[midway,above] {};
        \draw [thick,->,shorten >=1mm] (3) to (4) node[midway,above] {};
        
    \end{tikzpicture}
    \caption{Compartmental model for cancer progression with a mixture of progressive and indolent pre-clinical cancers.}
    \label{fig:model-a}
\end{subfigure}
\begin{subfigure}{\textwidth}
    \centering
    \begin{tikzpicture}
        \definecolor{cbgray1}{gray}{0.5}
        \definecolor{cbgray2}{gray}{0.25}    
        
        \draw [thick] (0.0,0.0) -- (0.0,2.75);
        \draw[decoration={markings,mark=at position 1 with {\arrow[scale=2,>=stealth]{>}}},postaction={decorate},thick] (0.0,0.0) -- (7.75,0.0);
        \node at (7.2,-0.45) {Age};

        \draw (0.5, -0.1) -- (0.5, 0.1);
        \node at (0.5, -0.4) {$t_0$};

        \draw (0.0, 2.25) node[anchor = east] {($H$)};
        \draw (0.0, 1.5) node[anchor = east] {($P$)};
        \draw (0.0, 0.75) node[anchor = east] {($C$)};

        \draw (0.5, 2.25) -- (6.0, 2.25);
        \draw (6.0, 2.25) node[anchor = west] {Individual 1};

        \draw [thick,densely dotted] (0.5, 2.15) -- (2.0, 2.15);
        \draw [thick,densely dotted] (2.0, 2.15) -- (2.0, 1.4);
        \draw [thick,densely dotted] (6.0, 1.4) -- (2.0, 1.4);
        \draw (6.0, 1.4) node[anchor = west] {Individual 2};
        \draw (2.0, -0.1) -- (2.0, 0.1);
        \node at (2.0, -0.4) {$\tau^{HP}_2$};

        \draw [dashed] (0.5, 2.05) -- (3.0, 2.05);
        \draw [dashed] (3.0, 2.05) -- (3.0, 1.3);
        \draw [dashed] (3.0, 1.3) -- (4.5, 1.3);
        \draw [dashed] (4.5, 1.3) -- (4.5, 0.55);
        \draw (4.5, 0.55) node[anchor = west] {Individual 3};
        \draw (3.0, -0.1) -- (3.0, 0.1);
        \node at (3.0, -0.4) {$\tau^{HP}_3$};
        \draw (4.5, -0.1) -- (4.5, 0.1);
        \node at (4.5, -0.4) {$\tau^{PC}_3$};
        
    \end{tikzpicture}
    \caption{Illustration of possible individual clinical trajectories through the compartmental model.
    }
    \label{fig:model-b}
\end{subfigure}        
\caption{
    Panel (a) depicts the compartmental model for the progression of cancer. Panel (b) illustrates possible individual trajectories through the compartmental model. Individuals start in compartment $H$. Only transitions from $H$ to $P$  and from $P$ to $C$ are possible, with compartment $P$ being absorbing for indolent individuals and compartment $C$ being absorbing for all others. In Panel (b), individual 1 does not develop pre-clinical cancer before the end of the observation period and, therefore, remains in $H$ until that time.
    Individual 2 develops pre-clinical cancer at age $\tau^{HP}_2$, at which time they transition to $P$, but they do not develop clinical cancer before the end of the observation period and, therefore, remain in $P$. Note that the pre-clinical cancer of individual 2 may be either indolent, in which case $\tau^{PC}_2=\infty$, or progressive with a clinical onset time after the end of the observation period.
    Individual 3 develops pre-clinical cancer at age $\tau^{HP}_3$ and transitions to state $P$. At age $\tau^{PC}_3$, they develop clinical cancer and thus transition to $C$. As $C$ is absorbing, the observation period ends with their transition time into $C$.
}
\label{fig:model}
\end{figure}

We model the dynamics of cancer progression as a multi-state mixture model. The model, depicted in Figure~\ref{fig:model-a},  consists of the three compartments \textit{healthy} ($H$), \textit{pre-clinical cancer} ($P$) and \textit{clinical cancer} ($C$). Following \citep{ryser2019identification}, pre-clinical cancers are modeled as a mixture of progressive and indolent tumors. All individuals start in the healthy state $H$, and transition to a state of pre-clinical screen-detectable cancer $P$ after a randomly distributed time. Once in $P$, individuals with an indolent cancer stay in this compartment indefinitely, while individuals with a progressive cancer transition to $C$ when their cancer becomes symptomatic.

Several early works consider purely progressive models, in which all pre-clinical tumors progress to clinical disease after a finite time \citep{day1984simplified,shen1999parametric,shen2001screening}. Allowing a proportion of pre-clinical cancers to be non-progressive is particularly important for slowly growing or indolent disease entities, such as low-risk prostate cancers or in situ breast cancers \citep{baker2014lead}. Despite the challenges of fitting such mixture models in practice \citep{ryser2019identification}, they have gained in popularity, especially in the study of breast cancer progression  \citep{chen1996markov,duffy2005overdiagnosis,olsen2006overdiagnosis,seigneurin2011overdiagnosis, alagoz2018university, wu2018overdiagnosis,ryser2022estimation}.

To formalize the model structure, we let the starting age $t_0\ge0$ correspond to the age at which individuals become susceptible to developing pre-clinical cancer, so that the hazard rate of pre-clinical cancer is assumed to be zero before $t_0$. Next, let $\thp$ be the age at pre-clinical onset (the transition from $H$ to $P$) and let $\tpc$ be the age at clinical onset (the transition from $P$ to $C$); by definition, $t_0 \le \thp \le \tpc$. When a tumor is indolent, we set $\tpc=\infty$, reflecting the fact that indolent cancers never progress to the clinical stage. Figure~\ref{fig:model-b} illustrates the evolution of three individuals throughout this multi-state model. 

The associated stochastic process is characterized by the probabilistic mechanisms governing the transitions between the compartments. Denote the waiting time until pre-clinical cancer with $\sigma^H =\thp-t_0$ and the pre-clinical sojourn time with $\sigma^P =\tpc-\thp$, corresponding to the interval between the onset of pre-clinical cancer and the symptomatic diagnosis. An individual with an indolent tumor has $\sigma^P=\infty$. We assume that 
\begin{equation}  \label{eq:F}
    \sigma^H \sim \FH, \quad \sigma^P \sim \FP,
\end{equation}
independently, where $\FH$ is a distribution on the positive line, and $\FP$ corresponds to the mixture cure model \citep{peng2014cure},
\begin{equation} \label{eq:F-p}
    \FP = \psi \delta_\infty + (1-\psi)\Fprog.
\end{equation}
with a point mass at $\infty$ for indolent cancers and a density on the positive line for progressive cancers. Here, $\psi \in [0,1]$ is the proportion of indolent cancers and $\delta_\infty$ the Dirac distribution with its mass at $\infty$. Distribution~\eqref{eq:F-p} implies that a proportion $\psi$ of tumors are indolent with infinite sojourn times while the sojourn time of progressive tumors follows distribution $\Fprog$.

Our inferential framework will apply to arbitrary choices of parametric distributions on the positive line for $\FH$ and $\Fprog$, resulting in a flexible semi-Markovian formulation. We write $\theta$ to refer to the collective parameters of $\FH$ and $\Fprog$. This formulation encompasses existing models as special cases:  for example, letting $\FH$ and $\Fprog$ be exponential distributions results in the Markov mixture model of \citep{seigneurin2011overdiagnosis}, and $\psi=0$ recovers the purely progressive model of \citep{shen1999parametric, shen2001screening}. Introducing the indolence indicator $I$, which equals $1$ if the cancer is indolent and $0$ otherwise, we can also express the distribution \eqref{eq:F-p} hierarchically as
\begin{align} \label{eq:I}
    I \sim \Ber(\psi), \quad 
    \sigma^P\mid I \sim 
    \begin{cases}
    \delta_\infty, &\quad I = 1 \\
    \Fprog, &\quad I = 0
    \end{cases},
\end{align}
where $\Ber(p)$ denotes the Bernoulli distribution with probability parameter $p$.

\section{Likelihood and inferential methodology}  \label{sec:method}

\subsection{Likelihood of complete data}  \label{sec:likelihood}

Consider a sample of $n$ independent individuals, and let the subscript $\iton$ refer to variables of the $i$th individual. 
We refer to the data for individual $i$ as $X_i=\left(I_i, \thpi, \tpci\right)$ and write $\X=(X_1, \dots, X_n)$ for the data of the entire population. By independence between the individuals, the likelihood of the process described by equations \eqref{eq:F}-\eqref{eq:I} factorizes as
\begin{equation}  \label{eq:likelihood-complete}
    \tilde{L}(\theta, \psi;\X)= \prod_{i=1}^n \tilde{L}_i(\theta, \psi;X_i), \quad \text{where} \quad \tilde{L}_i(\theta, \psi;X_i) = \begin{cases}
        \SH\left( c_i - t_0; \theta\right), & \quad c_i < \thpi\\
        \fH\left( \shi; \theta\right) \fI(I_i; \psi) \Sprog\left( c_i - \thpi; \theta\right)^{1-I_i}, & \quad \thpi \le c_i < \tpci \\
        \fH\left( \shi; \theta\right)(1-\psi)\fprog\left( \spi; \theta\right), & \quad \tpci \le c_i
    \end{cases}.
\end{equation}
is the likelihood of person $i$ observed until their censoring age $c_i$. Here, $f_\cdot$ and $S_\cdot$ are the density and survivial function of the corresponding distribution $\F_\cdot$ and $\fI$ is the probability mass function of the Bernoulli distribution~\eqref{eq:I}.

We can understand the individual likelihood \eqref{eq:likelihood-complete} as follows.
The contribution of a person $i$ to the likelihood $\tilde{L}(\theta;\X)$ depends on which compartment they are in at age $c_i$.
If individual $i$ does not develop pre-clinical cancer by the censoring age $\left( c_i < \thpi\right)$, then their contribution is the survival function $\SH\left(c_i - t_0; \theta\right)$ of $\FH$. 
If they develop pre-clinical cancer but not clinical cancer by the censoring time $\left(\thpi\le c_i<\tpci\right)$, then their contribution consists of the product of the density $\fH\left(\shi; \theta\right)$ of $\FH$, the Bernoulli probability mass function of the indolence indicator $\fI(I_i; \psi)$, and $\Sprog\left(c_i - \thpi; \theta\right)^{1-I_i}$ which is either the constant function $1$ for indolent cases $(I_i=1)$ or the survival function of $\Fprog$ for progressive cases ($I_i=0$). 
Finally, if the individual develops clinical cancer by the censoring time $\left(\tpci \le c_i\right)$, then their contribution is the product of $\fH\left(\shi; \theta\right)$, the probability $1-\psi$ of a progressive cancer, and the density $\fprog\left(\spi; \theta\right)$ of distribution $\Fprog$.

\subsection{Partially observed data} \label{sec:observed-data}

\begin{figure}
\centering
    \begin{subfigure}[b]{0.3\textwidth}
        \centering
        \begin{tikzpicture}
            \definecolor{cbgray1}{gray}{0.5}
            \definecolor{cbgray2}{gray}{0.25}    
            
            \draw [thick] (0.0,0.0) -- (0.4,0.0);
            \draw [thick] (0.3,-0.15) -- (0.5,0.15);
            \draw [thick] (0.5,-0.15) -- (0.7,0.15);
            \draw[decoration={markings,mark=at position 1 with {\arrow[scale=2,>=stealth]{>}}},postaction={decorate},thick] (0.6,0.0) -- (4.0,0.0);
            \node at (3.5,-0.65) {\begin{tabular}{c} Age \\ (in years) \end{tabular}};

    
            \draw (0.0, 2.25) node[anchor = east] {($H$)};
            \draw (0.0, 1.5) node[anchor = east] {($P$)};
            \draw (0.0, 0.75) node[anchor = east] {($C$)};

            \draw (0.0, -0.1) -- (0.0, 0.1);
            \node at (0.0, -0.4) {$t_0$};
    
            \draw [dotted, color=cbgray2] (1.0, 0.0) -- (1.0, 2.75);
            \draw (1.0, -0.1) -- (1.0, 0.1);
            \node at (1.0, -0.4) {$40$};
            \node at (1.0, 3.0) {$O^1_1=0$};
            
            \draw [dotted, color=cbgray2] (2.5, 0.0) -- (2.5, 2.75);
            \draw (2.5, -0.1) -- (2.5, 0.1);
            \node at (2.5, -0.4) {$45$};
            \node at (2.5, 3.0) {$O^1_2=0$};

            \node at (3.0, 0.0)[circle,fill,inner sep=1.5pt]{};
            \node at (3.0, 0.35) {$46.5$};
    
            \draw (0.0, 2.25) -- (3.0, 2.25);
            
        \end{tikzpicture}
        \caption{Censored individual.}
        \label{fig:observed-data-1}
    \end{subfigure}
    \hfill
    \begin{subfigure}[b]{0.3\textwidth}
        \centering
        \begin{tikzpicture}
            \definecolor{cbgray1}{gray}{0.5}
            \definecolor{cbgray2}{gray}{0.25}
            
            \draw [thick] (0.0,0.0) -- (0.4,0.0);
            \draw [thick] (0.3,-0.15) -- (0.5,0.15);
            \draw [thick] (0.5,-0.15) -- (0.7,0.15);
            \draw[decoration={markings,mark=at position 1 with {\arrow[scale=2,>=stealth]{>}}},postaction={decorate},thick] (0.6,0.0) -- (4.0,0.0);
            \node at (3.5,-0.65) {\begin{tabular}{c} Age \\ (in years) \end{tabular}};
    
            \draw (0.0, 2.25) node[anchor = east] {($H$)};
            \draw (0.0, 1.5) node[anchor = east] {($P$)};
            \draw (0.0, 0.75) node[anchor = east] {($C$)};

            \draw (0.0, -0.1) -- (0.0, 0.1);
            \node at (0.0, -0.4) {$t_0$};
    
            \draw [dotted, color=cbgray2] (1.0, 0.0) -- (1.0, 2.75);
            \draw (1.0, -0.1) -- (1.0, 0.1);
            \node at (1.0, -0.4) {$40$};
            \node at (1.0, 3.0) {$O^2_1=0$};
            
            \draw [dotted, color=cbgray2] (2.5, 0.0) -- (2.5, 2.75);
            \draw (2.5, -0.1) -- (2.5, 0.1);
            \node at (2.5, -0.4) {$45$};
            \node at (2.5, 3.0) {$O^2_2=1$};
            
            \draw (0.0, 2.25) -- (0.5, 2.25);
            \draw (0.5, 2.25) -- (0.5, 1.5);
            \draw (0.5, 1.5) -- (2.5, 1.5);

            
        \end{tikzpicture}
        \caption{Screen-detected individual.}
        \label{fig:observed-data-2}
    \end{subfigure}
    \hfill
    \begin{subfigure}[b]{0.3\textwidth}
        \centering
        \begin{tikzpicture}
            \definecolor{cbgray1}{gray}{0.5}
            \definecolor{cbgray2}{gray}{0.25}
            
            \draw [thick] (0.0,0.0) -- (0.4,0.0);
            \draw [thick] (0.3,-0.15) -- (0.5,0.15);
            \draw [thick] (0.5,-0.15) -- (0.7,0.15);
            \draw[decoration={markings,mark=at position 1 with {\arrow[scale=2,>=stealth]{>}}},postaction={decorate},thick] (0.6,0.0) -- (4.0,0.0);
            \node at (3.5,-0.65) {\begin{tabular}{c} Age \\ (in years) \end{tabular}};
    
            \draw (0.0, 2.25) node[anchor = east] {($H$)};
            \draw (0.0, 1.5) node[anchor = east] {($P$)};
            \draw (0.0, 0.75) node[anchor = east] {($C$)};

            \draw (0.0, -0.1) -- (0.0, 0.1);
            \node at (0.0, -0.4) {$t_0$};
    
            \draw [dotted, color=cbgray2] (1.0, 0.0) -- (1.0, 2.75);
            \draw (1.0, -0.1) -- (1.0, 0.1);
            \node at (1.0, -0.4) {$40$};
            \node at (1.0, 3.0) {$O^3_1=0$};
            
            \draw [dotted, color=cbgray2] (2.5, 0.0) -- (2.5, 2.75);
            \draw (2.5, -0.1) -- (2.5, 0.1);
            \node at (2.5, -0.4) {$45$};
            \node at (2.5, 3.0) {$O^3_2=0$};
            
            \draw (0.0, 2.25) -- (1.6, 2.25);
            \draw (1.6, 2.25) -- (1.6, 1.5);
            \draw (1.6, 1.5) -- (2.8, 1.5);
            \draw (2.8, 1.5) -- (2.8, 0.75);

            \node at (2.8, 0.0)[circle,fill,inner sep=1.5pt]{};
            \node at (2.8, 0.35) {$46$};
            
        \end{tikzpicture}
        \caption{Interval-detected individual}
        \label{fig:observed-data-3}
    \end{subfigure}
\par\bigskip
    \begin{subfigure}[!htbp]{0.4\textwidth} 
        \centering
        \begin{tabular}{ccc}
            \toprule
            Individual & Screen age & Screen outcome \\
            \midrule
            $1$ & $40$ & $0$ \\
            $1$ & $45$ & $0$ \\
            $2$ & $40$ & $0$ \\
            $2$ & $45$ & $1$ \\
            $3$ & $40$ & $0$ \\
            $3$ & $45$ & $0$ \\
            \bottomrule
        \end{tabular}
        \caption{Observed screen data for the 3 individuals corresponding to Figures~\ref{fig:observed-data-1}-\ref{fig:observed-data-3}.}
        \label{tab:observed-data-screen}
    \end{subfigure}
\hfill
    \begin{subfigure}[!htbp]{0.4\textwidth} 
        \centering
        \begin{tabular}{cccc}
            \toprule
            Individual & $t^{PC}_i$ & Censoring age \\
            \midrule
            $1$ & $46.5$ & $46.5$ \\
            $2$ & $45$ & $45$ \\
            $3$ & $46$ & --- \\
            \bottomrule
        \end{tabular}
        \caption{Observed right-censored transition age into $C$, $t^{PC}_i$, and censoring age for the 3 individuals corresponding to Figures~\ref{fig:observed-data-1}-\ref{fig:observed-data-3}.}
        \label{tab:observed-data-cancer}
    \end{subfigure}
\caption{
Observed data for the individuals depicted in Figure~\ref{fig:model-b}. 
Panels (a-c): diagrams displaying the trajectories and screen results. Panel (d): screen results. Panel (e): right-censored onset age of clinical cancer. 
In Panel (a), the individual is in $H$ at the two screen times, the two screens are therefore negative. The individual is censored at age $46.5$, $1.5$ years after their last screen.
In Panel (b), the individual is in $P$ when the two screens take place. The first screen is a false negative, and the second is positive. The individual is censored at age $45$, their age at the positive screen.
In Panel (c), the screen at age $45$ is a false negative. The onset time of the clinical cancer at age $46$ is observed.
}
\label{fig:observed-data}
\end{figure}

We now turn our attention to real-world data sources, such as prospective screening trials and observational screening cohorts, In such settings, the data consist of a series of screens and associated results, as well as clinical cancer diagnoses.  Let $n_i$ denote the number of screens available for person $i$, with their age and outcome at each of these screening times contained in the vectors $\left(t^i_1, \dots t^i_{n_i}\right)$ and $\left(O^i_1, \dots O^i_{n_i}\right) \in \{0,1\}^{n_i}$, respectively, where $t_1^i=e_i$ corresponds to the age at study entry; $O^i_j=1$ when the $j$th screen of participant $i$ is positive and $O^i_j=0$ when it is negative. These data are complemented by the time $t^{PC}_i=\min\left\lbrace c_i, \tpci\right\rbrace$, which corresponds to either the age of diagnosis with a screen-detected cancer, the age of diagnosis with a clinical cancer, or to the age of censoring otherwise, and by the indicator $\delta_i = \delta_{\tpci}\left(t^{PC}_i\right)$, which takes on the values $1$ when $t^{PC}_i=\tpci$, that is, for individuals with a clinical cancer, and $0$ otherwise.

We assume that the screening ages are non-informative, that is, that they are \textit{ignorable} or independent of the disease process and of the model parameters. Under this assumption, we may condition on them in our analysis \citep{gelman2013bayesian}. We therefore denote the observed variables of individual $i$ with $Y_i=\left(O^i_1,\dots,O^i_{n_i}, t^{PC}_i, \delta_i\right)$ and condition on the ignorable screen ages $(t^i_1,\dots,t^i_{n_i})$. We further write $\Y = (Y_1, \dots, Y_n)$. Tables~\ref{tab:observed-data-screen} and \ref{tab:observed-data-cancer} present the observed data of a sample of three individuals for illustration.

The screening outcomes are modeled as mutually independent Bernoulli random variables
\begin{equation} \label{eq:p}
O^i_j \sim \Ber\left(p^i_j\right), \quad j=1,\dots,n_i, \quad \iton, \qquad \text{where}  \quad  p^i_j = \begin{cases}
        0, \quad t^i_j < \thpi  \\
        \beta, \quad \thpi \le t^i_j < \tpci \\
    \end{cases}.
\end{equation}
Equation~\eqref{eq:p} indicates that the probability of positive screen $p^i_j$ is $0$ when a person is in the $H$ compartment and is $\beta$ when they are in $P$; the parameter $\beta\in[0,1]$ is, therefore, interpreted as the \textit{screening sensitivity} and is not assumed to be known. Upon a cancer diagnosis---whether clinical or screen-detected---an individual receives treatment and leaves the screening program, implying that any positive screen is the last screen of the individual. Consequently, $p^i_j$ need not be defined when $t^i_j > \tpci$.

The observed data partition the individuals into three groups. The \textit{censored} group consists of individuals with no positive screens and who do not develop a clinical cancer before their censoring age. These individuals either remain in $H$ until their censoring age, or transition into $P$ and remain there until their censoring age; any screens conducted when they are in $P$ are false negatives (see Figure~\ref{fig:observed-data-1}). The \textit{screen-detected} group includes individuals who are in state $P$ at the time of the positive screen (see Figure~\ref{fig:observed-data-2}). Finally, the \textit{interval-detected} group contains those with no positive screen but whose cancer reaches the clinical stage before their censoring age (see Figure~\ref{fig:observed-data-3}). These individuals may or may not have false negative screens.

\subsection{Bayesian inference via data-augmented \MCMC} \label{sec:bayesian-inference}

We fit the semi-Markov model defined by Eqs.~\eqref{eq:F}-\eqref{eq:p} to censored data $\Y$ in a Bayesian framework. By Bayes Theorem, $\pi(\theta, \psi,\beta\mid\mathbf{Y}) \propto \pi(\theta,\psi,\beta) L(\theta,\psi,\beta; \Y)$ where $\pi(\theta,\psi,\beta)$ is the prior on the parameters $(\theta,\psi,\beta)$, and the marginal likelihood
\begin{equation} \label{eq:likelihood-marginal}
    L(\theta, \psi,\beta; \Y) = \int f(\Y \mid \x, \theta, \psi, \beta) L(\theta,\psi;\x) d\x,
\end{equation}
is related to the complete-data likelihood~\eqref{eq:likelihood-complete} via integration. Equation~\eqref{eq:likelihood-marginal} integrates over all the possible clinical histories of the individuals consistent with the observed data. No general closed-form expression for this integral exists in the semi-Markov case \citep{hsieh2002assessing}, and numerical integration over this high-dimensional latent space is intractable.

To overcome this difficult integration step, DA-MCMC takes advantage of the tractable form ~\eqref{eq:likelihood-complete}. The observed data $\Y$ are augmented with the latent variables $\Z$ such that $L(\theta,\psi;\Z)$ and $f(\Y \mid\theta, \psi, \beta, \Z)$ have closed-form expressions. We can then use MCMC to sample from the joint posterior distribution 
\begin{equation}  \label{eq:posterior-joint}
    \pi(\theta, \psi, \beta, \Z\mid\Y) \propto f(\Y \mid\theta, \psi, \beta, \Z) L(\theta,\psi; \Z) \pi(\theta, \psi,\beta).
\end{equation}
Given $M$ MCMC draws $\left\lbrace\left(\theta^{(m)},\psi^{(m)},\beta^{(m)}, \Z^{(m)}\right)\right\rbrace_{m=1}^M$, marginalizing out latent variables \textit{a posteriori} becomes trivial. The values $\left\lbrace\left(\theta^{(m)},\psi^{(m)}, \beta^{(m)}\right)\right\rbrace_{m=1}^M$ form an empirical approximation to the original distribution of interest $\pi\left(\theta,\psi,\beta\mid\Y\right)$. In other words, simply ignoring the latent variables in each posterior sample yields a valid marginal posterior of the parameters of interest. 
Similarly,  $\left\lbrace\Z^{(m)}\right\rbrace_{m=1}^M$ form a Monte Carlo approximation to $\pi\left(\Z\mid\Y\right)$.
An advantage of the Bayesian formulation is that inference is not limited to $\theta$, $\psi$, $\beta$ and $\Z$. The posterior distribution $\pi\left(g(.)\mid\Y\right)$ of any functional $g(\theta, \psi, \beta, \Z)$ is approximated by the transformed sample $\left\lbrace g^{(m)} \right\rbrace_{m=1}^M$, with $g^{(m)} = g\left(\theta^{(m)},\psi^{(m)},\beta^{(m)}, \Z^{(m)}\right)$. In Section~\ref{sec:BCSC}, we will be let $g$ be the overdiagnosis rate.

In our model, the latent variables $\Z$ consist of the unobserved transition age into $P$ and indolent status of each individual. We write $\Z = (\Z_1, \dots, \Z_n)$, where $\Z_i = \left(\Zhpi, \ZIi\right)$ are the latent variables of individual $i$. $\Zhpi \in (t_0, c_i] \cup c_i^+$ is the pre-clinical onset age and $\ZIi\in\{0,1\}$ the indolent status. Here, $c_i^+$ indicates the event that the transition age is larger than $c_i$. Upon introducing these latent variables, we have closed-form expressions for $L(\theta,\psi;\Z)$ and $f(\Y \mid \theta, \psi, \beta, \Z)$. First, by independence, we have
\begin{equation}  \label{eq:likelihood-latent}
    L(\theta,\psi;\Z) = \prod_{i=1}^n L_i(\theta,\psi;\Z_i) = \prod_{i=1}^n \tilde{L}_i(\theta,\psi;\Z_i) / N_i(\theta,\psi, e_i),
\end{equation}
where 
\begin{equation}  \label{eq:likelihood-latent-i}
    \tilde{L}_i(\theta,\psi;\Z_i) \propto f(\Z_i \mid \theta, \psi) = \begin{cases}
        \SH\left( c_i - t_0; \theta\right), & \quad \Zhpi = c_i^+\\
        \fH\left( \Zhpi-t_0; \theta\right) \fI(\ZIi; \psi), & \quad \Zhpi \le c_i
    \end{cases},
\end{equation}
and 
\begin{align} \label{eq:N}
    N_i(\theta,\psi, e_i) := \Pr(\tpci > e_i\mid \theta, \psi)
    & = \Pr(I_i = 1) + \Pr(I_i = 0)\left[\Pr\left(\thpi > e_i \mid I_i = 0\right) + \Pr\left(\thpi < e_i, e_i < \tpci \mid I_i = 0\right) \right]\nonumber \\
    & = \psi + (1-\psi) \left[\SH\left(e_i-t_0;\theta\right) + \int_{t_0}^{e_i} \fH(t-t_0;\theta) \Sprog\left(e_i-t;\theta\right)dt\right]
\end{align}
scales the contribution from person $i$ by the probability that they did not develop clinical cancer before entering the study at age $e_i$. The inclusion of the factor~\eqref{eq:N} in the likelihood~\eqref{eq:likelihood-latent} comes from the fact that data from cohort analyses are left-truncated: participants that develop clinical cancer before the study started are excluded. Therefore, we condition on the fact that the individuals in the sample did not develop clinical cancer before the start of the study. As long as $\fH$ and $\Sprog$ are continuous and bounded, the univariate integral in \eqref{eq:N} is easy to approximate numerically using a quadrature rule (e.g., the function \verb+integrate+ in base \verb+R+).

Second, by independence, we have
\begin{equation}  \label{eq:observation-distribution}
    f(\Y \mid \theta, \psi, \beta, \Z) = \prod_{i=1}^n f(Y_i \mid \Z_i, \theta, \psi, \beta) = \prod_{i=1}^n f_t\left(t^{PC}_i, \delta_i\mid \Z_i, \theta\right) f_O\left(O^i_1,\dots,O^i_{n_i} \mid \Z_i, \beta\right),
\end{equation}
where the distribution of the right-censored clinical onset age is
\begin{equation}  \label{eq:observation-distribution-clinical}
    f_t\left(t^{PC}_i, \delta_i\mid \Z_i, \theta\right) = 
    \begin{cases}
        1-\delta_i, & \quad \Zhpi = c_i^+\\
        \Sprog\left( c_i - \Zhpi; \theta\right)^{1-\ZIi} (1-\delta_i) +  \fprog\left( t^{PC}_i - \Zhpi; \theta\right) \delta_i, & \quad \Zhpi \le c_i
    \end{cases},
\end{equation}
and the distribution of the screen outcomes is
\begin{equation}  \label{eq:observation-distribution-screen}
    f_O\left(O^i_1,\dots,O^i_{n_i} \mid \Z_i, \beta\right) = \prod_{j=1}^{n_i} \left(p^i_j\right)^{O^i_j} \left(1-p^i_j\right)^{1-O^i_j} = \beta^{m_i^+} (1-\beta)^{m_i^-},
\end{equation}
with $m_i^+ = \sum_{j=1}^{n_i} O^i_j$ the number positive screens of person $i$, and $m_i^- = \sum_{j:\Zhpi\le t^i_j} \left(1-O^i_j\right)$ their number of false negative screens.

\subsubsection{Prior specification}  \label{sec:prior}

While the sampling methodology applies to any prior formulation  $\pi(\theta, \psi,\beta)$, we assume for convenience that the parameters are independent \textit{a priori}
\begin{equation}  \label{eq:prior-independent}
    \pi(\theta, \psi, \beta)=\pi(\theta)\pi(\psi)\pi(\beta),
\end{equation}
Moreover, we specify the prior distribution of $\beta$ to be $\pi(\beta)=\Be(a_\beta, b_\beta)$ and that of $\psi$ to be $\pi(\beta)=\Be(a_\beta, b_\beta)$, where $\Be(a,b)$ denotes to the beta distribution with density proportional to $x^{a-1}(1-x)^{b-1}$. The choice of a beta prior distribution for $\beta$ is motivated by conjugacy. Theorem~\ref{theo:conjugacy} holds for any choice of $\FH$ and $\Fprog$; its proof appears in web Appendix~A.

\begin{theorem}[Conjugacy] \label{theo:conjugacy}
If $\beta$ is independent of the other parameters a priori, then the beta distribution is conjugate to the conditional likelihood in~\eqref{eq:posterior-joint} for $\beta$.
In particular, under the prior distribution
\begin{equation} \label{eq:prior-conjugacy} 
   \pi(\theta, \psi, \beta)=\pi(\theta, \psi)\pi(\beta), \quad \pi(\beta)=\Be(a, b),
\end{equation}
the full conditional distribution of $\beta$ is
\begin{equation}  \label{eq:posterior-conjugacy}
    \pi(\beta\mid \theta, \psi, \Z, \Y) = \Be\left(a + m^+, b + m^-\right),
\end{equation}
with $m^+ = \sum_{i=1}^n m_i^+$ and $m^- = \sum_{i=1}^n m_i^-$ the total number of true positive and false negative screens in the sample, respectively.
\end{theorem}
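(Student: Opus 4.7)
The plan is to compute the full conditional $\pi(\beta\mid\theta,\psi,\Z,\Y)$ directly from the joint posterior in \eqref{eq:posterior-joint} by isolating the $\beta$-dependent factors, and then recognize the resulting functional form as an unnormalized beta density.

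First I would start from
\[
\pi(\theta,\psi,\beta,\Z\mid\Y) \;\propto\; f(\Y\mid\theta,\psi,\beta,\Z)\,L(\theta,\psi;\Z)\,\pi(\theta,\psi)\,\pi(\beta),
\]
using the prior factorization \eqref{eq:prior-conjugacy}. Conditioning on $(\theta,\psi,\Z,\Y)$, the factor $L(\theta,\psi;\Z)$ given in \eqref{eq:likelihood-latent}--\eqref{eq:likelihood-latent-i} does not involve $\beta$, nor does $\pi(\theta,\psi)$. Hence the full conditional satisfies
\[
\pi(\beta\mid\theta,\psi,\Z,\Y) \;\propto\; f(\Y\mid\theta,\psi,\beta,\Z)\,\pi(\beta).
\]

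Next I would inspect the observation model \eqref{eq:observation-distribution}--\eqref{eq:observation-distribution-screen}. The clinical-onset factor $f_t(t^{PC}_i\mid\Z_i,\theta)$ in \eqref{eq:observation-distribution-clinical} depends only on $\theta$ and the latent variables, not on $\beta$; all $\beta$-dependence is concentrated in $f_O$, which by \eqref{eq:observation-distribution-screen} equals $\beta^{m_i^+}(1-\beta)^{m_i^-}$ for each individual. Multiplying across individuals and using $m^+=\sum_i m_i^+$ and $m^-=\sum_i m_i^-$ gives
\[
f(\Y\mid\theta,\psi,\beta,\Z) \;\propto\; \beta^{m^+}(1-\beta)^{m^-},
\]
where the proportionality absorbs all factors independent of $\beta$.

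Combining this with the beta prior kernel $\pi(\beta)\propto\beta^{a-1}(1-\beta)^{b-1}$ yields
\[
\pi(\beta\mid\theta,\psi,\Z,\Y)\;\propto\;\beta^{a+m^+-1}(1-\beta)^{b+m^--1},
\]
which I would identify as the unnormalized density of $\Be(a+m^+,\,b+m^-)$, giving exactly \eqref{eq:posterior-conjugacy}. Since nothing in this argument used a specific choice of $\FH$ or $\Fprog$, the conjugacy holds for any such choice, as claimed. There is no real obstacle here — the only point requiring care is verifying that $\beta$ truly appears nowhere in $L(\theta,\psi;\Z)$ (including the left-truncation normalizer $N_i$ in \eqref{eq:N}) nor in the clinical-onset factor $f_t$; once that is noted, the computation is a one-line Beta--Bernoulli conjugacy.
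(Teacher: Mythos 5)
Your proposal is correct and follows essentially the same route as the paper's own proof in Appendix~\ref{app:conjugacy}: write down the joint distribution, observe that all $\beta$-dependence sits in the screening factor $\prod_i \beta^{m_i^+}(1-\beta)^{m_i^-}$, and combine with the beta prior kernel to read off $\Be(a+m^+, b+m^-)$. Your added remark that neither $L(\theta,\psi;\Z)$ (including the truncation normalizer $N_i$) nor $f_t$ involves $\beta$ is exactly the implicit step the paper relies on, so there is no gap.
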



\section{Markov chain Monte Carlo scheme}  \label{sec:mcmc}

We propose a DA-MCMC sampler that alternates between a Gibbs step for $\beta$, univariate Metropolis-Hastings steps for the elements of $\theta$, and for $Z^{HP}_1, \dots, Z^{HP}_n$, and a block Metropolis-Hastings step of $\left(\psi, Z^I_1, \dots, Z^I_n\right)$ \citep{gelman2013bayesian}. This sampling scheme is summarized in Algorithm~\ref{alg:da-mcmc}. The Gibbs step for $\beta$ proceeds by drawing a new value for this parameter from its full conditional distribution~\eqref{eq:posterior-conjugacy} given the most recent values of $\Z$. The proposal distribution for the Metropolis-Hastings steps are described in web Appendix~B. Updating $\psi$ and $\left(Z^I_1, \dots, Z^I_n\right)$ jointly is crucial, because these quantities are highly correlated \textit{a posteriori}. We note that methodologically, the elements of $\theta$ could also be jointly updated in a block Metropolis-Hastings step; though such block step may lead to further efficiency gains, we do not explore this here.

To initialize the DA-MCMC sampler, one only needs to specify the initial values of $(\theta, \psi, \beta)$. Given some initial values $\left(\theta^{(0)}, \psi^{(0)}, \beta^{(0)}\right)$, the algorithm generates the initial values of the latent variables $\Z^{(0)}$ using the independent proposal distributions of the Metropolis-Hastings steps described in web Appendix~B.

\begin{algorithm}
\begin{algorithmic}
\Require a number of iterations $M$, initial values $(\theta^{(0)}, \psi^{(0)}, \beta^{(0)})$, and observed data $\Y$
\State $\Z^{(0)} \leftarrow \text{initialize}\left(\theta^{(0)}, \psi^{(0)}, \beta^{(0)}\right)$
\For{$m = 1, \dots, M$}
    \State $\beta^{(m)} \sim \pi\left(\beta \mid \cdot\right)$ \quad [Gibbs update, c.f. Theorem~\ref{theo:conjugacy}]
    \For{$j = 1, \dots, |\theta|$}
        \State $\theta_j^{(m)}\leftarrow \text{Metropolis-Hastings}\left(\theta_j; .\right)$ \quad [update the $j$th element of $\theta$]
    \EndFor
    \For{$i = 1, \dots, n$}
        \State $\left(\Zhpi\right)^{(m)}\leftarrow \text{Metropolis-Hastings}\left(\Zhpi; .\right)$
    \EndFor
    \State $\psi^{(m)}, \left(Z^I_1, \dots, Z^I_n\right)^{(m)} \leftarrow \text{Metropolis-Hastings}\left(\psi, Z^I_1, \dots, Z^I_n; .\right)$
\EndFor
\State \Return $\left\lbrace\left(\theta^{(m)},\psi^{(m)},\beta^{(m)}, \Z^{(m)}\right)\right\rbrace_{m=1}^M$
\end{algorithmic}
\caption{Steps of the DA-\MCMC sampler.}
\label{alg:da-mcmc}
\end{algorithm}

\subsection{Theoretical guarantees}

Theorem~\ref{theo:erg} shows that the Markov chain underlying the DA-MCMC sampler is ergodic; its proof appears in web Appendix~C. The ergodic theorem, therefore, holds for any $\pi$-integrable function $h$, and the estimator $\bar{h}_M = \frac{1}{M} \sum_{m=1}^{M} h\left(\theta^{(m)}, \psi^{(m)}, \beta^{(m)}, \Z^{(m)}\right)$ is asymptotically consistent for $E_\pi h$ under any initial distribution \citep{billingsley2017probability}.

\begin{theorem}[Ergodicity]  \label{theo:erg}
The Markov chain $\left\lbrace\left(\theta^{(m)}, \psi^{(m)} \beta^{(m)}, \Z^{(m)}\right)\right\rbrace_{m=1}^M$ underlying the DA-MCMC algorithm is ergodic.
\end{theorem}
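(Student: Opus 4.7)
The plan is to establish ergodicity by verifying the three standard requirements for a Harris-ergodic Markov chain on the mixed state space of $(\theta,\psi,\beta,\Z)$: (i) invariance of the target $\pi(\theta,\psi,\beta,\Z\mid\Y)$ under the composed kernel; (ii) $\pi$-irreducibility; and (iii) aperiodicity. Once these hold, the conclusion follows from a standard result (e.g., Meyn--Tweedie Theorem~13.0.1, or Roberts--Rosenthal 2004). Invariance is essentially automatic: the Gibbs move for $\beta$ draws from the full conditional derived in Theorem~\ref{theo:conjugacy}, and each Metropolis--Hastings sub-move uses the correct acceptance ratio~\eqref{eq:alpha} against the appropriate conditional target, so each sub-kernel preserves $\pi$; a deterministic-scan composition of $\pi$-invariant kernels is $\pi$-invariant.

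For irreducibility, I would first fix a reference measure on the mixed state space --- Lebesgue measure on the continuous pieces ($\theta$, $\beta$, $\psi$, and the continuous part of each $\Zhpi$) times counting measure on the discrete pieces ($\ZIi\in\{0,1\}$ and the atom $\varnothing$ for $\Zhpi$ in the censored group) --- and then show that one full scan places positive mass on every neighborhood of every point in the support of $\pi$. The Gaussian random-walk proposals for the components of $\theta$ and for $\psi$ (with reflection at $\{0,1\}$) have strictly positive densities on their respective supports, and the $\Be(a+m^+,b+m^-)$ draw for $\beta$ has a strictly positive density on $(0,1)$. The independence proposals~\eqref{eq:proposal-censored},~\eqref{eq:proposal-screen},~\eqref{eq:proposal-interval} for each $\Zhpi$ assign positive weight to the $\varnothing$ atom (when applicable) and to every inter-screen subinterval, because each mixture weight~\eqref{eq:w0-censored}--\eqref{eq:w-interval} is strictly positive whenever $\FH$ has a density that is positive on $(t_0,\infty)$ (a mild regularity assumption satisfied by the standard parametric choices such as Weibull, gamma, and log-normal). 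Finally, the Bernoulli proposal~\eqref{eq:proposal-I} for $\ZIi$ takes both values with positive probability whenever $\psi\in(0,1)$. Combining positivity of the proposals with positivity of the acceptance ratios on sets where the target density is positive yields the desired reachability. Aperiodicity is immediate: in each random-walk MH step there is a positive probability of rejection, and hence a positive probability of remaining at the current state in one scan.

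The main obstacle is book-keeping the hybrid geometry of the state space rather than any deep analytical issue. The atom at $\varnothing$ for $\Zhpi$ in the censored group mixes a Dirac with a continuous density, which has to be handled explicitly when defining the dominating measure and when verifying that the composed kernel can move both \emph{into} and \emph{out of} the atom with positive probability; the former comes from the $p^i_0$ weight in~\eqref{eq:proposal-censored}, and the latter from the positive mixture weights $p^i_k$, $k\geq 1$. A secondary subtlety is that the joint block update of $(\psi,Z^I_1,\dots,Z^I_n)$ couples many coordinates, but since $\psi^\star$ is drawn from a positive-density random walk and each $\ZIi$ is drawn from a Bernoulli with parameter in $(0,1)$ conditional on $\psi^\star$, the joint proposal remains positive across the entire discrete product space, which is all that is needed. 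With these two points handled, the standard $\phi$-irreducibility and aperiodicity arguments for MH-within-Gibbs samplers apply directly, yielding ergodicity.
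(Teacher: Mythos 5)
Your proposal is correct and follows essentially the same route as the paper's own argument: invariance of each sub-kernel (Gibbs and Metropolis--Hastings), invariance of their deterministic-scan composition, and strict positivity of the composite kernel yielding $\pi$-irreducibility and aperiodicity, hence Harris ergodicity. You simply carry out the positivity bookkeeping (mixed dominating measure, the $\varnothing$ atom, the mixture weights and Bernoulli proposal) more explicitly than the paper, which states these facts without elaboration.
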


\subsection{Model comparison}  \label{sec:model-comparison}

Mis-specifying a model of cancer natural history can lead to biased inference \citep{cheung2022multistate}. To identify specifications for $\FH$ and $\Fprog$ that fit the data well, we compare the predictive fit of models with different choices for these distributions. We accomplish this by extending  ALOOCV \citep{gelfand1992model, vehtari2017practical} to the class of latent variable models. While exact CV requires fitting the model multiple times, approximate CV can be easily computed from the posterior draws of a MCMC sampler.
Once we have obtained predictive fit estimates for each model under consideration, we remove the models whose predictive fit is significantly lower than that of the best model, as determined by pairwise t-tests. We then proceed by choosing among the remaining models the one whose specification for $\FH$ and $\Fprog$ we deem agrees the most with existing medical knowledge.
The procedure is straightforward to implement; web Appendix~F contains the complete details of our extension of ALOOCV to latent models.

\subsection{Model implementation}  \label{sec:model-implementation}

The methodology described in Section~\ref{sec:bayesian-inference} applies to any continuous distribution $\FH$ and $\Fprog$ on the positive line. For the simulations in Section~\ref{sec:sim} and the case study in Section~\ref{sec:BCSC}, we choose $\FH=\Wei(\lH, \aH)$ and $\Fprog=\Wei(\lprog,\aprog)$, with $\Wei(\lambda, \alpha)$ denoting the Weibull distribution with rate $\lambda>0$ and shape $\alpha>0$, and whose density is proportional to $x^{\alpha-1}\exp\{-\lambda x^{\alpha}\}$. The Weibull choice has several advantages, including flexibility and interpretability: if $\alpha>1$ ($\alpha<1)$, the hazard function increases (decreases) with age as $\propto t^{\alpha-1}$; and if $\alpha=1$ it corresponds to the exponential distribution. In addition, the Weibull distribution's survival function has a closed-form expression, and its cumulative distribution function can be inverted, enabling efficient sampling from $\trfh$ in the proposal; distribution described in web Appendix~B.`

We implement the MCMC sampler in the programming language \verb|C++| for efficiency. The sampler is fast, scaling to cohort studies with tens of thousands of individuals. For convenience, the \verb|C++| implementation is made available in \verb|R| with the help the package \verb+RCPP+ \citep{eddelbuettel2011rcpp} via the \verb|R| package \verb|baclava| available on \verb|CRAN|. In the MCMC runs in Sections~\ref{sec:sim}-\ref{sec:BCSC}, we keep $10^3$ thinned draws post warm-up phase due to memory constraints.

\section{Performance on simulated data}  \label{sec:sim}

To examine the mixing properties of our sampler, we run multiple MCMC samplers initialized at over-dispersed values on a simulated dataset. We then use the potential scale reduction factor \citep{gelman1992inference, brooks1998general}, the effective sample size, and traceplots to assess the algorithm's mixing properties.

For the purposes of a simulated dataset, we set the starting age for non-zero onset hazard to $t_0=30$ years, and assume a linear increase in both the onset and progression hazards ($\alpha_H=\alpha_{prog}=2$). We further set the Weibull rates to $\lH=6.5\times10^{-5}$ and $\lprog=3.14\times10^{-2}$, resulting in a cumulative risk of pre-clinical onset by age $80$ of $15\%$, and a mean sojourn time of 
$\muprog=5$ years. Finally, we set the indolent probability to $\psi=0.1$ and the screen sensitivity to $\beta=0.85$. We simulate the natural histories of cancer among $n=40,000$ individuals under the following distribution of first-screen age, inter-screen interval and age of last follow-up. The age of the first screen is sampled from the categorical distribution with weights $w_k \propto \exp\{-k/5\}$ for $k=40,41,\dots,80$. Inter-screen intervals are sampled from the distribution $\textsf{Poisson}(0.5)$ plus $1$ year, resulting in a mean inter-screen interval of $1.5$ years. Finally, for individual $\iton$, the age of last follow-up is $\min\{100, t^i_1 + W_i\}$ with $W_i \sim \Exp(1/5)$ independently over $i$, implying that individuals remain in the screening program for $5$ years on average.

We use the independent prior distribution $\pi(\lH, \lprog, \psi, \beta) = \pi(\lH)\pi(\lprog)\pi(\psi)\pi(\beta)$ with 
\begin{equation*}  
    \pi(\lH)=\Ga(1, 0.01), \quad \pi(\lprog)=\Ga(1, 0.01), \quad \pi(\psi) = \Be(1, 1), \quad \pi(\beta)=\Be(38.5, 5.8),
\end{equation*}
where $\Ga(a,b)$ corresponds to the gamma distribution with density proportional to $x^{a-1}\exp\{-bx\}$.
We choose these distributions for their flexibility and, in the case of $\pi(\beta)$, for its conjugacy (Theorem~\ref{theo:conjugacy}). The prior distribution on $\lH$, $\lprog$ and $\psi$ are weakly informative, and that of $\beta$ places $95\%$ of its mass on the interval $(0.76, 0.95)$, reflecting prior empirical data on the performance of mammography screening \citep{Lehman2017national}.

Next, we apply our MCMC sampler to the simulated dataset and estimate the values of the parameters $\lH$, $\lprog$, $\psi$ and $\beta$. We independently run $20$ MCMC samplers initialized at over-dispersed values of these parameters (the latent variables need not be initialized, see Section~\ref{sec:mcmc}). The initial values $\lH^{(0,m)}$, $\lprog^{(0,m)}$, $\psi^{(0,m)}$ and $\beta^{(0,m)}$ of the $m$th sampler $(m=1\dots,20)$ are sampled from a wide interval around the true values, mutually independently,
\begin{align*}
    & 10^4\times \lH^{(0,m)} \sim \Unif(0.2, 2), \quad \lprog^{(0,m)} \sim \Unif(0.01,0.1), \quad
    \psi^{(0,m)} \sim \Unif(0,1), \quad \beta^{(0,m)} \sim \Unif(0.7, 0.95).
\end{align*}

For each chain, we set the warm-up period to $10^4$ iterations and run the sampler for an additional $2\times10^4$ iterations. For each parameter, the potential scale reduction factor, computed on the post warm-up thinned draws with the \verb+R+ package \verb+coda+ \citep{plummer2006coda}, is smaller than $1.005$ and the effective sample size is superior to 2,500, indicating that the chains---which were initialized at over-dispersed values---converged during the warmup phase and mixed well once they reached stationarity \citep{gelman1992inference}. This is a remarkable achievement given the high dimensional latent space of 80,000 latent variables.


In addition to exploring the parameter space efficiently, the sampler also explores the space of the latent variables swiftly. Figure~\ref{fig:mcmc-latent} displays the MCMC draws of the latent variables for an individual whose cancer was screen-detected at age $58$ years, after a series of five consecutive negative screens.  The individual had $5$ negative screens at ages $46$, $48$, $51$, $53$ and $55$. The traceplots of the indolent indicator (Figure~\ref{fig:traceplot-indolent}) and the pre-clinical transition time (Figure~\ref{fig:traceplot-tau}) indicate that the sampler rapidly explores the latent space. In particular, the frequent switches between the values $0$ and $1$ of the binary draws in Figure~\ref{fig:traceplot-indolent} show that the Markov chain is not stuck in any particular configuration of the latent variables for any large number of iterations.

An interesting feature of our data augmentation approach is that we obtain a patient-specific estimate of the onset time of pre-clinical cancer, a quantity unobservable in practice. Figure~\ref{fig:histogram-tau} illustrates the MCMC approximation of the marginal posterior distribution $\pi\left(\tau^{HP}\mid\Y\right)$ for the above individual with a screen-detected cancer at age 58 years. Because the screening sensitivity is estimated to be between $0.75$ and $0.90$, it is most likely that the time of onset took place  after the last screen at $55$ years, less likely that it did between the screens at ages $53$ and $55$, and very unlikely before the screen at age $53$. As illustrated by the discontinuities in this histogram, the screens themselves contribute most of the information available about the unobserved onset age of pre-clinical cancer.

\begin{figure}
    \centering
    \includegraphics[width=1\linewidth]{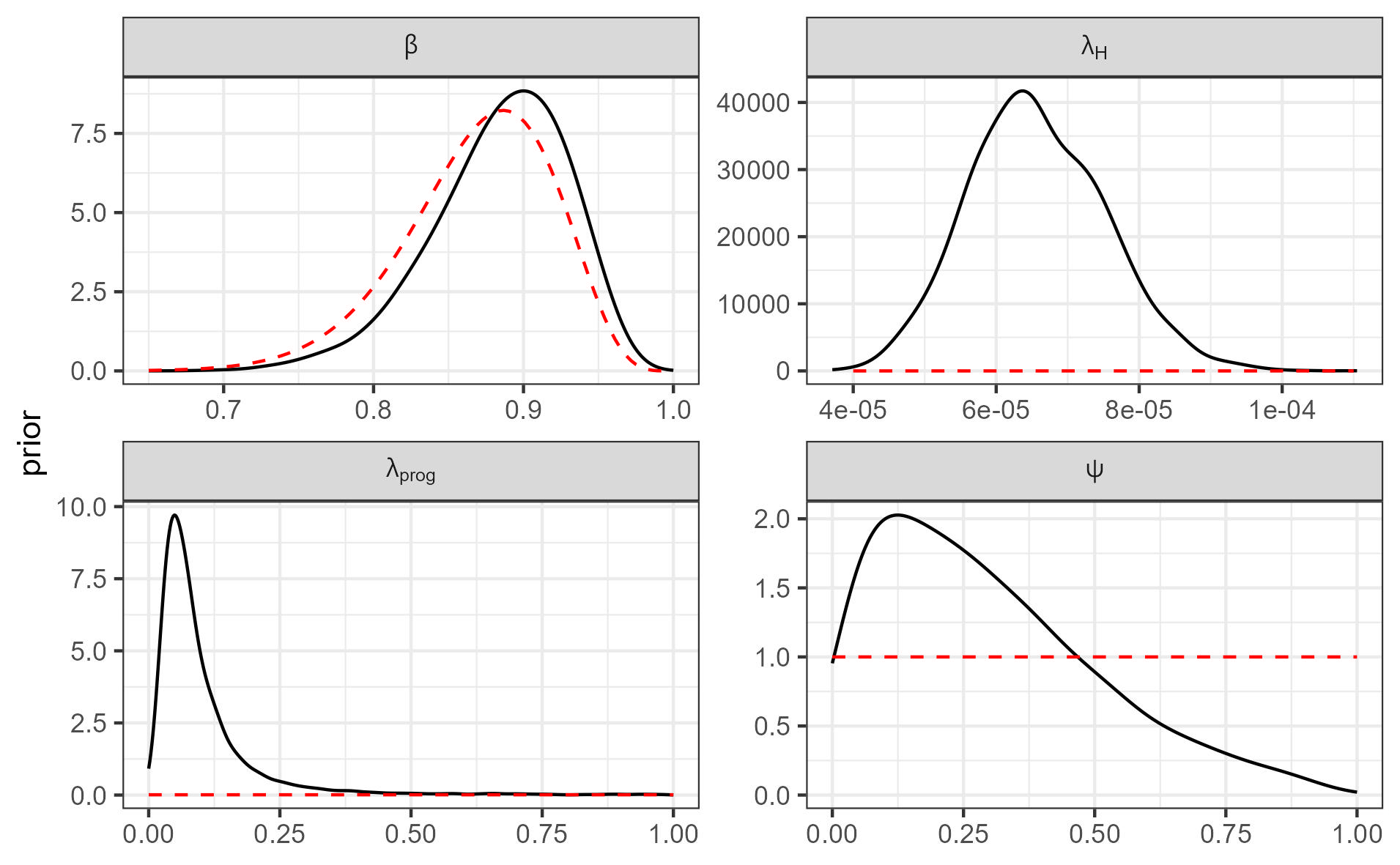}
    \caption{Estimate of the density of the posterior distribution (solid black line) based on the MCMC draws and density of the prior distribution (dashed blue line) for the simulated data.}
    \label{fig:mcmc-parameters}
\end{figure}

\begin{figure}
\centering
\begin{subfigure}{0.3\textwidth}
    \includegraphics[width=\textwidth]{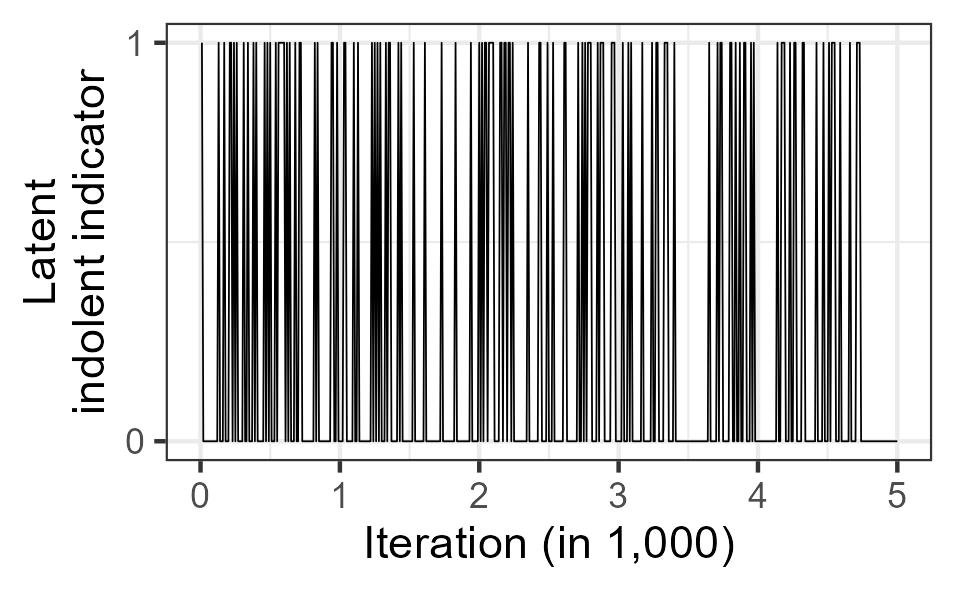}
    \caption{Traceplot of $Z^I$.}
    \label{fig:traceplot-indolent}
\end{subfigure}
\hfill   
\begin{subfigure}{0.3\textwidth}
    \includegraphics[width=\textwidth]{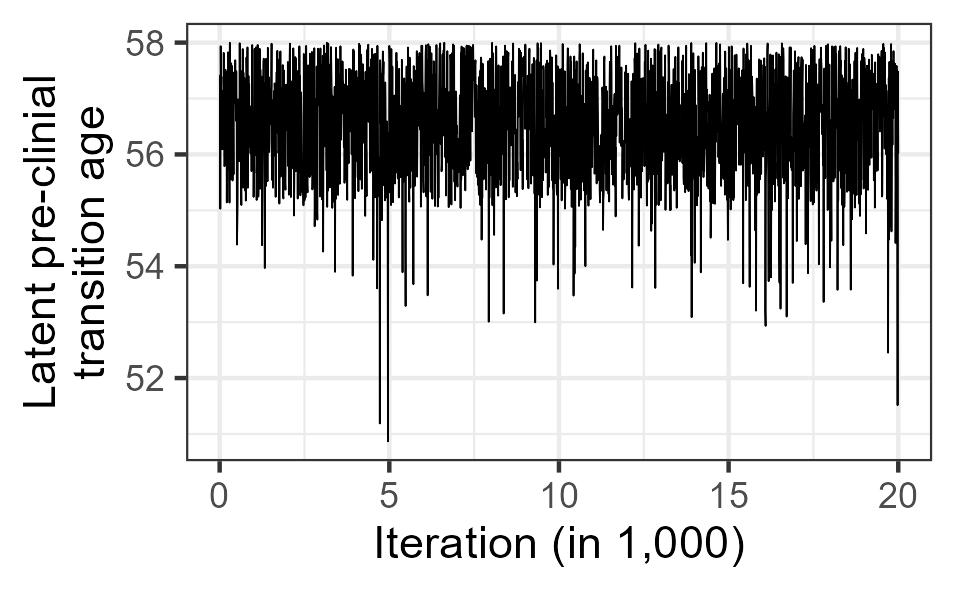}
    \caption{Traceplot of $Z^{HP}$.}
    \label{fig:traceplot-tau}
\end{subfigure}
\hfill
\begin{subfigure}{0.3\textwidth}
    \includegraphics[width=\textwidth]{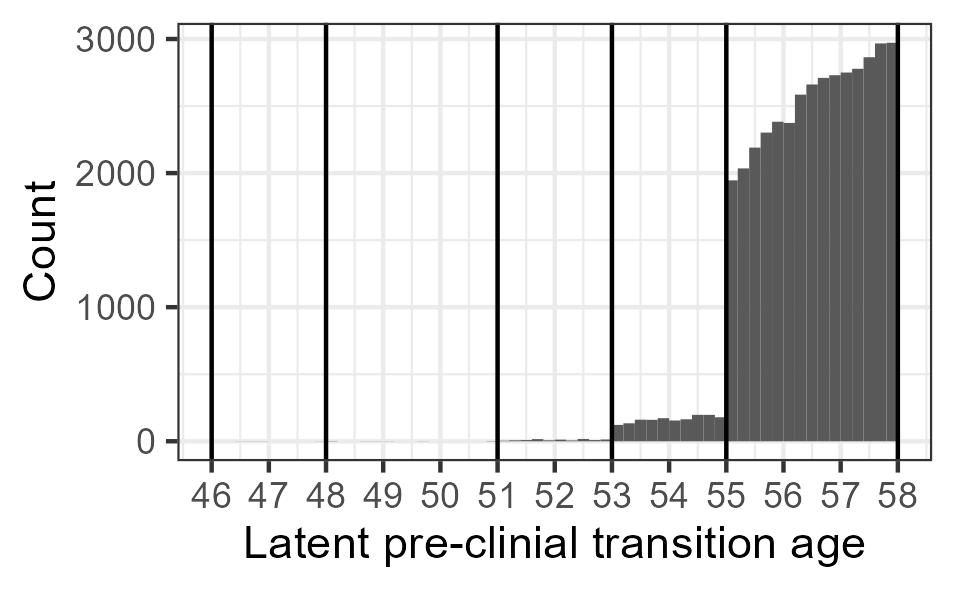}
    \caption{Histogram of $Z^{HP}$.}
    \label{fig:histogram-tau}
\end{subfigure}    
     
\caption{MCMC draws from the first chain ($m=1$) for the latent variables of an individual with a screen-detected cancer at age $58$ and five negative screens at ages $(46, 48, 51, 53, 55)$. Panel (a) is a traceplot of the latent indolent indicator over an interval of $5.000$ iterations for clarity. Panel (b) is a traceplot of the latent pre-clinical transition age. Panel (c) shows the marginal posterior distribution of the latent pre-clinical transition age; vertical lines depict the ages at which the screens occur.}
\label{fig:mcmc-latent}
\end{figure}

\section{Application to data from the \BCSC} \label{sec:BCSC}

We illustrate the utility of our semi-Markov model and accompanying DA-MCMC sampler with an analysis of the screening and cancer diagnosis histories from participants in the \BCSC (BCSC), a large population-based mammography screening registry in the US \citep{ballard1997breast}. The BCSC registries collect mammography results and link them to breast cancer outcomes and vital status through linkage  with regional population-based  Surveillance, Epidemiology, and End Results (SEER) programs, state tumor registries, and state death records.

We use the analytic cohort previously described in \citep{ryser2022estimation}, which includes $35,986$ women aged $50$ to $74$ years who received their first screening mammogram at a BCSC facility between $2000$ and $2018$. Details are provided in Wed Appendix G. Each BCSC registry and the Statistical Coordinating Center (SCC) have received institutional review board approval for all study procedures, including passive permission (one registry), a waiver of consent (six registries), or both depending on facility (one registry), to enroll participants, link data, and perform analytic studies. All procedures are Health Insurance Portability and Accountability Act (HIPAA) compliant. All registries and the SCC have received a Federal Certificate of Confidentiality and other protections for the identities of women, physicians, and facilities who are subjects of this research.

Following \citep{ryser2022estimation}, we set $t_0=30$.
We employ the independent prior $\pi(\lH, \lprog, \psi, \beta) = \pi(\lH)\pi(\lprog)\pi(\psi)\pi(\beta)$ with 
\begin{equation*}  
    \pi(\lH)=\Ga(1, 0.01), \quad \pi(\lprog)=\Ga(a_\textsf{prog}, b_\textsf{prog}), \quad \pi(\psi) = \Be(1, 1), \quad \pi(\beta)=\Be(38.5, 5.8),
\end{equation*}
where, for a given $\aprog$, we choose values for $a_\textsf{prog}$ and $b_\textsf{prog}$ that induce a weakly informative distribution on the mean sojourn time $\muprog$ with $0.9$ mass on the interval $(1,9)$ years, reflecting prior knowledge about breast cancer progression. Web Appendix~I contains the details.

\subsection{Model specification}

We consider the $21$ models with $\aH\in\{1,1.5,2,2.5,3,4,5\}$ and $\aprog\in\{1,1.5,2\}$. We fit each model to the BCSC data and estimate their predictive fit via ALOOCV (see Section~\ref{sec:model-comparison}). As no cancer screens in the data are conducted before age $50$, we set $t_0=30$. The burn-in period is set to $2\times10^4$ iterations and is followed by an additional $5\times10^4$ iterations; this results in an effective sample size of at least $10^3$ for each parameter in every setting.

Figure~\ref{fig:aloocv} displays the predictive fit of the models. The relation between the predictive fit and $\aH$ follows an inverted-U shape, and, nominally, the best model fit was achieved at $\aH=4$ and $\aprog=1$. Models with $\aH\le1.5$ and $2$ of the $3$ models with $\aH=2$ have a predictive fit that is significantly worse than the best model at the significance level $0.05$. This suggests that the pre-clinical hazard rate of breast cancer increase at least linearly with age. However, the exact speed of the increase---linear, quadratic, etc---cannot be inferred from these data alone, presumably because the available screens only cover the range $50$-$85$ years, with $97\%$ of them limited to the range $50$-$75$ years.

\begin{figure}
    \centering
    \includegraphics[width=0.5\linewidth]{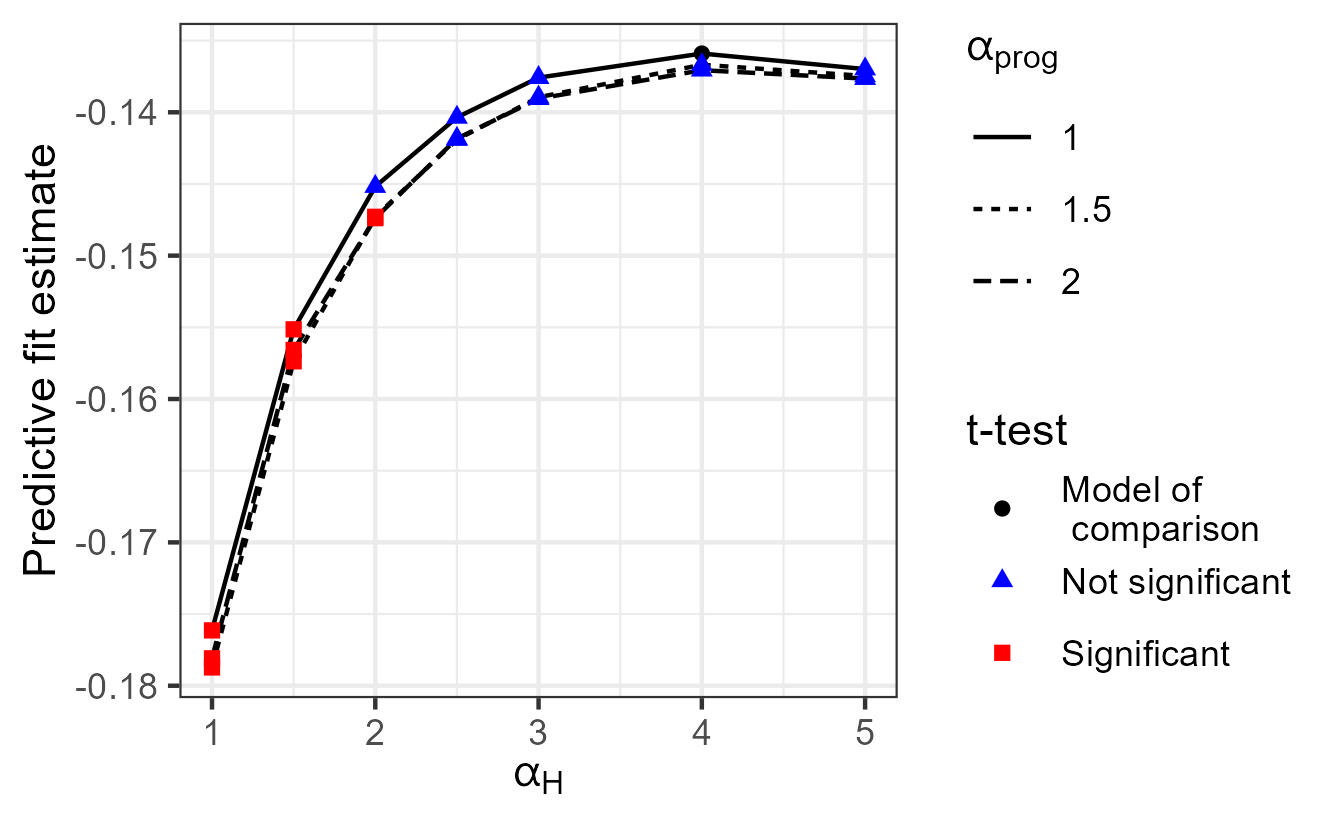}
    \caption{Predictive fit of the models on the BCSC data. The black dot denotes the best model. The red squares correspond to models whose predictive score is significantly different from that of the best model, based on a two-sample t-test at the significance level $0.05$). The blue triangle correspond to models whose predictive fit is not significantly different from that of the best model.}
    \label{fig:aloocv}
\end{figure}

In contrast, $\aprog$ does not have a significant impact on the predictive fit. The three lines in Figure~\ref{fig:aloocv}, which correspond to the three values of this parameter, are extremely close to one another, and the pairwise differences between the predictive fit of models with the same $\aH$ did not reach statistical significance at $\alpha=0.05$. This indicates that the data do not favor any particular value for this parameter. This is to be expected because the endpoints of the pre-clinical sojourn times of progressive cancer are mostly unobserved. Indeed, the start age is never exactly observed; furthermore, as the screens are imperfect, this value is not even interval censored. In addition, the end age of the sojourn is exactly observed only for the interval-detected group ($0.18\%$ of the sample) and is right-censored for the other individuals. This makes the task of learning the exact shape of the distribution of sojourn times from the observed data alone extremely difficult.

We now proceed by choosing among the models whose predictive fit is not significantly worse than the best model the one with values for $\aH$ and $\aprog$ that we deem most plausible given our medical understanding of the progression of breast cancer. First, we choose a value of $\aprog$ that implies a realistic predictive distribution of sojourn times among progressive cancers. Conditional on $\lprog$, the sojourn time of a progressive cancer---denoted by $\sigma_\textsf{prog}$---follows the Weibull distribution $\Wei\left(\lprog, \aprog\right)$. Its predictive distribution
\begin{equation}  \label{eq:predictive}
    \pi(\sigma_\textsf{prog}\mid\Y, \aprog) = \int\Wei\left(\sigma_\textsf{prog};\lprog, \aprog\right)\pi\left(\lprog\mid\Y, \aprog\right)d\lprog,
\end{equation}
is therefore a rate-mixture of Weibull distributions, in which the Weibull rate is integrated out. We approximate the posterior distribution $\pi\left(\lprog\mid\Y, \aprog\right)$ with the empirical distribution of the MCMC draws $\left\lbrace\lprog^{(1)}, \dots, \lprog^{(M)}\right\rbrace$, giving the Rao-Blackwell estimate of the predictive distribution~\eqref{eq:predictive} \citep{gelman2013bayesian}
\begin{equation}  \label{eq:rao-blackwell}
    \hat{\pi}(\sigma_\textsf{prog}\mid\Y,\aprog) = \dfrac{1}{M} \sum_{m=1}^M \Wei\left(\sigma_\textsf{prog};\lprog^{(m)}, \aprog\right).
\end{equation}
Web Appendix~H compares the estimate~\eqref{eq:rao-blackwell} for the models with $\aH=2$ and $\aprog\in\{1,1.5,2\}$. Under $\aprog=1$, the predictive distribution has a large amount of  mass concentrated at $0$ and a fat tail. In contrast, under the semi-Markov models with $\aprog\in\{1.5,2\}$ the density of the predictive distribution has a thin tail, and no mass concentrated at $0$. As a result, the predictive probability of a sojourn time less than $0.5$ year or more than $15$ years is almost $20\%$ under $\aprog=1$, while it is only $1.3\%$ under $\aprog = 2$. A similar patterns is observed across all values of $\aH$. We thus opt for $\aprog=2$ because it results in a more realistic model of sojourn times.

Next, we choose the value of $\aH$. We opt for $\aH=2.5$ because it agrees the most with the trend in age-specific data from SEER. Figure~\ref{fig:mcmc-bcsc} presents the posterior distribution of key parameters under the chosen model. The posterior mean of the indolence probability $\psi$ is $0.057$, with $95\%$ credible interval ($0.002-0.17$), and that of the mean sojourn time $\muprog$ is $5.6$ years ($95\%$ CI: $4.1-7.1$). Finally, the cumulative probability of pre-clinical onset by age 80 years is estimated to be $12.3\%$ ($95\%$ CI: $11.2\%-13.4\%$), which is consistent with an approximate 1 in 7 lifetime risk of developing breast cancer \citep{howlader2014seer}.

\begin{figure}
\centering
\begin{subfigure}{0.3\textwidth}
    \includegraphics[width=\textwidth]{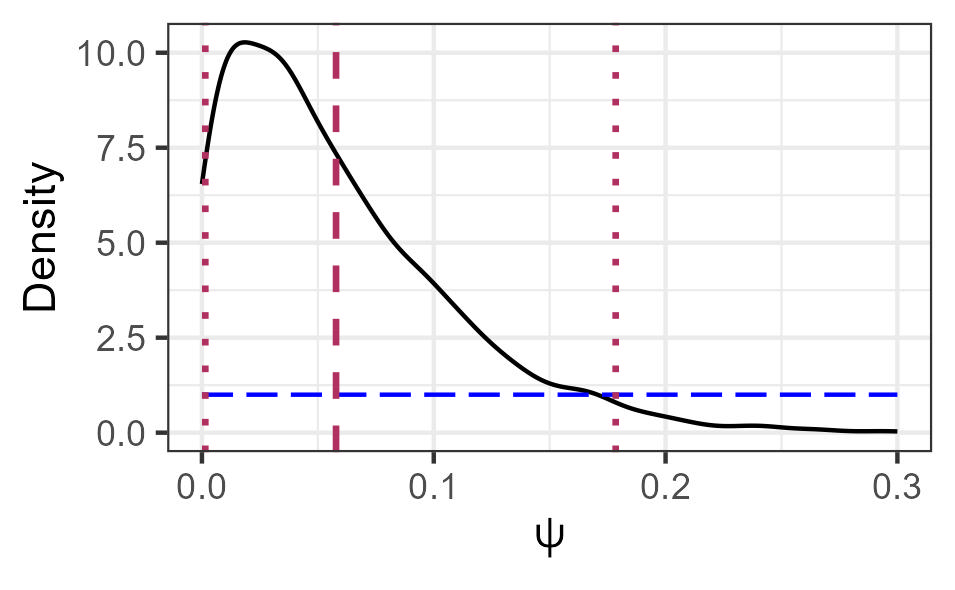}
    \caption{Posterior distribution of $\psi$.}
    \label{fig:histogram-psi}
\end{subfigure}
\hfill
\begin{subfigure}{0.3\textwidth}
    \includegraphics[width=\textwidth]{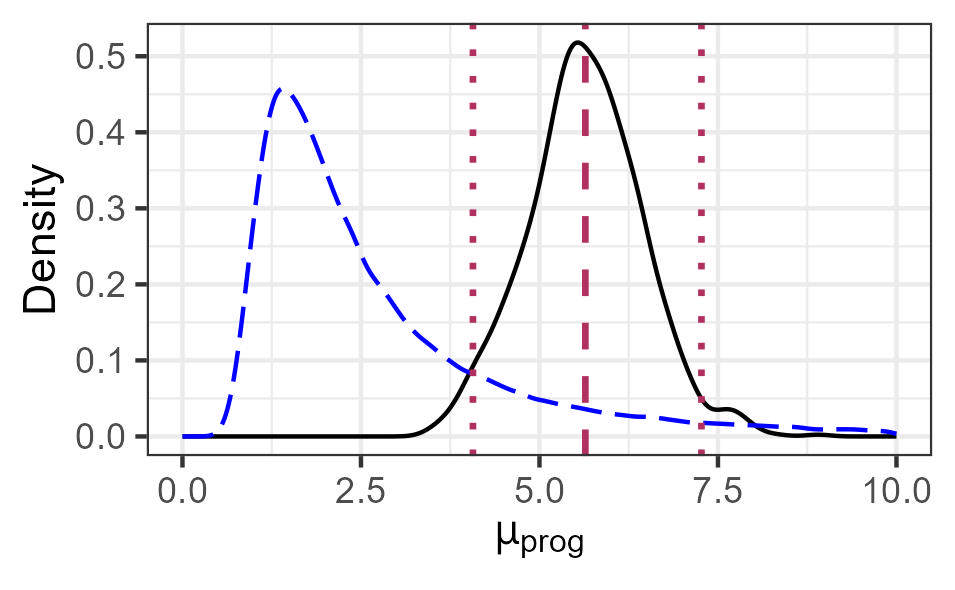}
    \caption{Posterior distribution of $\muprog$.}
    \label{fig:histogram-mu_P}
\end{subfigure}
\hfill
\begin{subfigure}{0.32\textwidth}
    \includegraphics[width=0.95\textwidth]{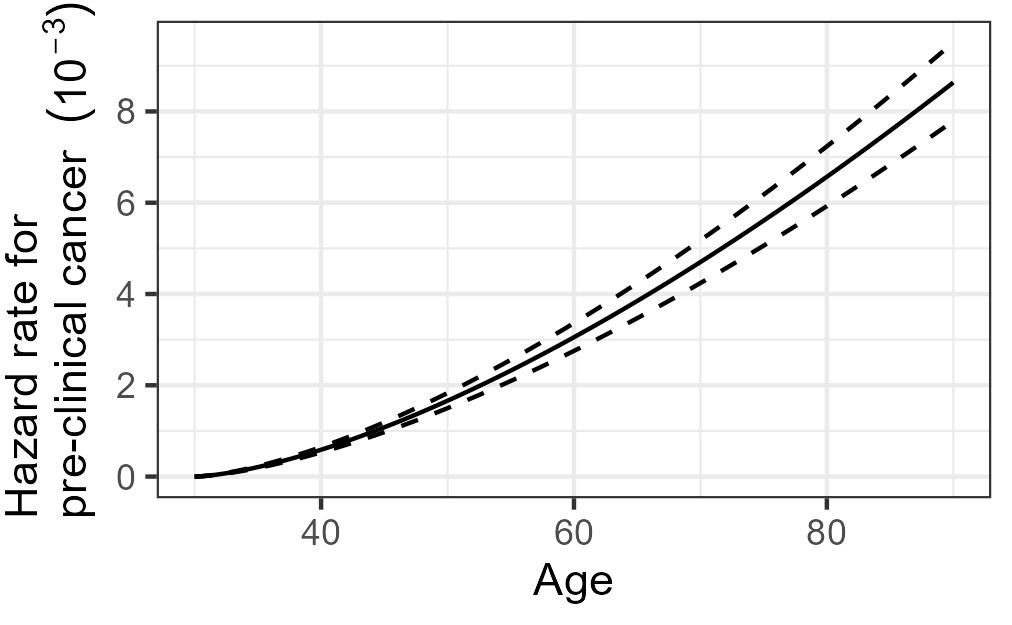}
    \caption{Estimated pre-clinical hazard function.}
    \label{fig:hazard}
\end{subfigure}
        
\caption{Results of the analysis of the data set from the \BCSC data set. Panels (a) and (b) display the marginal posterior distributions of the fraction of indolent pre-clinical cancers ($\psi$) and the mean sojourn time among progressive pre-clinical cancers ($\muprog$), respectively. The posterior means (dashed red line) and $95$\% credible intervals (dotted red lines) are included. The dashed blue lines correspond to the prior distributions.
Panel (c) shows the estimated pre-clinical hazard rate as a function of age, with the posterior median (solid line) and $95$\% credible band (dashed lines) indicated.
}
\label{fig:mcmc-bcsc}
\end{figure}

\subsection{Estimation of the overdiagnosis rate}  \label{sec:bcsc-overdiagnosis}

From the MCMC draws, one can estimate the posterior distribution of any function of the parameters. Here, we focus on the extent of \textit{overdiagnosis} in a screening program, defined as the the mammographic detection of cancer that would not become clinically evident in the woman's remaining lifetime. There are two contributions to overdiagnosis. First, by definition, every indolent cancer detected on mammography is overdiagnosed. Second, progressive screen-detected cancer are overdiagnosed if the time from screen-detection to death from a breast cancer-unrelated cause is shorter than the time from screen-detection to clinical progression (also referred to as the lead-time). 

We use the breast cancer natural history estimates from the BCSC cohort together with life tables that exclude death from breast cancer to estimate overdiagnosis rates, overall and by source, in a biennial screening program from age $50$ to $74$ years; details of the calculation are provided in web Appendix~J. Table~\ref{tab:overdiagnosis} provides a side-by-side comparison of the three models with $\aH=2.5$.  The three models yield similar estimates for the overdiagnosis rate and the respective contributions from indolence and mortality from other sources, showing that these results are robust to the choice of $\aprog$. This is reassuring given that the value of this parameter could not be learned from the data. The overdiagnosis estimates are similarly robust to values of $\aH$ around $2.5$. Moreover, the estimates we obtain are in line with those in \citep{ryser2022estimation}, which are based on a Markov model with a piece-wise constant onset rate for pre-clinical cancer, corroborating the claim that approximately 1 in 7 screen-detected breast cancers are overdiagnosed \cite{ryser2022estimation}.

\begin{center}
\begin{table*}[!h]%
\caption{Estimates of the overdiagnosis rate and the respective contribution of indolence and mortality from other causes for the nine models with $(\aH, \aprog) = \{2, 2.5, 3\}\times\{1, 1.5, 2\}$ and the model in \citep{ryser2022estimation}. The posterior means and $95\%$ credible intervals are reported.\label{tab:overdiagnosis}}
\begin{tabular*}{\textwidth}{@{\extracolsep\fill}llccccc@{}}
\toprule
		$\mathbf{\aH}$ & $\mathbf{\aprog}$ & \textbf{\shortstack{Overdiagnosis\\rate (\%)}} & \multicolumn{2}{@{}l}{\textbf{Source}} & \textbf{\shortstack{Mean sojourn\\time}} & \textbf{\shortstack{Probability of\\indolence$^{a}$ $(\%)$}}\\
		\cmidrule(r){4-5}
        & & & \textbf{Indolence (\%)}     & \textbf{\shortstack{Mortality from\\other causes (\%)}} & & \\
        \midrule
		$2$   & $1$   & $18.8$ $(6.5, 35.9)$ & $8.1$  $(0.2, 28.0)$ & $10.7$ $(2.8, 23.0)$ & $6.6$ $(5.0, 8.7)$ & $4.3$  $(0.1, 13.1)$\\
		$2$   & $1.5$ & $16.9$ $(7.9, 36.5)$ & $12.7$ $(3.8, 35.7)$ & $4.1$  $(0.7, 10.0)$ & $5.6$ $(4.0, 7.1)$ & $5.7$  $(0.2, 15.6)$\\
		$2$   & $2$   & $16.9$ $(5.6, 33.1)$ & $11.7$ $(1.0, 31.4)$ & $5.2$  $(0.9, 13.2)$ & $5.1$ $(3.4, 7.1)$ & $9.0$  $(0.5, 21.8)$\\
		$2.5$ & $1$   & $14.0$ $(4.1, 26.5)$ & $7.3$  $(0.2, 21.7)$ & $6.7$  $(1.4, 16.2)$ & $7.0$ $(5.3, 9.0)$ & $4.3$  $(0.1, 14.8)$\\
		$2.5$ & $1.5$ & $13.2$ $(3.9, 26.0)$ & $5.9$  $(0.2, 20.0)$ & $7.3$  $(1.6, 17.1)$ & $5.8$ $(4.4, 7.4)$ & $4.8$  $(0.2, 15.5)$\\
		$2.5$ & $2$   & $12.2$ $(3.6, 23.8)$ & $4.7$  $(0.1, 16.4)$ & $7.5$  $(1.8, 17.5)$ & $5.6$ $(4.1, 7.3)$ & $5.8$  $(0.1, 17.8)$\\
		$3$   & $1$   & $11.8$ $(3.4, 23.6)$ & $4.5$  $(0.1, 15.5)$ & $7.3$  $(1.8, 16.8)$ & $7.2$ $(5.5, 9.2)$ & $4.3$  $(0.1, 14.7)$\\
		$3$   & $1.5$ & $12.0$ $(3.2, 25.4)$ & $5.1$  $(0.1, 19.6)$ & $6.9$  $(1.6, 16.0)$ & $6.0$ $(4.6, 7.5)$ & $4.0$  $(0.1, 13.0)$\\
		$3$   & $2$   & $18.9$ $(9.4, 42.1)$ & $15.2$ $(5.4, 41.5)$ & $3.8$  $(0.6, 9.2)$  & $5.7$ $(4.3, 7.0)$ & $4.8$  $(0.2, 16.4)$\\
		\multicolumn{2}{c}{Model in\citep{ryser2022estimation}} 
                      & $15.4$ $(9.4,26.5)$  & $6.1$ $(0.2,20.1)$ & $9.3$ $(5.5,13.5)$ & $6.6$ $(4.9, 8.6)$ & $4.5$ $(0.1, 14.8)$\\
		\bottomrule
\end{tabular*}
\begin{tablenotes}
\item[$^{\rm a}$] Proportion of pre-clinical cancers that are indolent.
\end{tablenotes}
\end{table*}
\end{center}

\section{Discussion}  \label{sec:discussion}

This paper introduces a data-augmented MCMC sampler for fitting semi-Markov mixture models of cancer natural history to individual-level screening and cancer diagnosis histories. By allowing for age-dependent pre-clinical onset hazards, non-exponential sojourn times, and a mixture of indolent and progressive tumors, we substantially expand the range of cancer progression models and dynamics amenable to rigorous Bayesian inference. These advances mark a departure from prior approaches relying on simple Markov models for tractability. Our sampler efficiently explores the high-dimensional joint space of model parameters and latent variables, and automatically quantifies individual-level uncertainty of clinically relevant quantities such as a person's pre-clinical onset time, indolence status, and sojourn time.

A key application of our methodology is the estimation of screening-related overdiagnosis rates. In the motivating example of breast cancer, overdiagnosis has been documented as the chief harm of screening, yet there remains substantial uncertainty about its true extent. Due to the counterfactual nature of overdiagnosis, its estimation is methodologically challenging and, as in our case of observational data from a screened cohort, relies on sound estimates of the latent tumor progression dynamics. Applying our method to a high-quality cohort of women undergoing mammography screening, we make two important observations. First, our breast cancer overdiagnosis estimates are similar to those previously derived from the same cohort using a slightly different model structure, thus adding weight to the finding that approximately 1 in 7 screen-detected breast cancers are overdiagnosed. Second, although the data contain insufficient information to delineate the best-fitting sojourn time parameterization, we find that the overdiagnosis estimates were robust to changes in this parameterization. Robustness is particularly important because model misspecification is inevitable when modeling complex biological processes such as cancer progression. Finally, the latent variables provide individual-level estimates of pre-clinical cancer onset times. These data could, for example, be leveraged to derive individualized estimation of overdiagnosis.

The methodology presented here has focused on newly enabling exact inference beyond the Markovian setting. While we show that this increased model flexibility remains tractable for inference in large datasets, we note several limitations that may continue to be addressed in future work. Age-dependent extensions will allow for remaining restrictive assumptions to be relaxed. For example, allowing the sojourn time $\sigma_P$ to depend on the age of pre-clinical onset is more consistent with clinical knowledge that cancers with later onset tend to progress more slowly. While the method in principle can tackle semi-Markovian models in broad generality, its advantages are in part hampered by fixing a subset of the parameters. Our empirical results via ALOOCV suggest  that incorporating parameters such as the onset shape parameter $\alpha_H$ into the DA-MCMC sampler may be practically identifiable in some data settings.  Similarly, including patient-level covariates into model will further improve model fit and realism, and may find use toward personalized prevention strategies. We note that likelihood-free approaches are readily applicable to include such features and more complex parametrizations, albeit at the cost of intensive simulation-based computation. Structured parametrizations such as log-linear relationships may remain amenable to an exact sampling approach, and have also been studied using EM algorithms \citep{bu2025stochastic}. Future extensions of the method and software targeting such covariate-dependent rates, as well as   nonparametric specifications of the hazard, are promising directions to expand the contributions in this article.


\bmsection*{Author contributions}

The first author developed the DA-MCMC sampler, adapted ALOOCV to latent models, conducted the simulations in Section~\ref{sec:sim} and the case study in Section~\ref{sec:BCSC}, and wrote the manuscript. The second author implemented the DA-MCMC sampler and the ALOOCV adaptation in \verb|C++| and made this code available in the \verb|R| package \verb|baclava| on \verb|CRAN|; the third and fourth author are joint senior authors and provided conception of methodologies, statistical guidance, and domain expertise. 

\bmsection*{Acknowledgments}
This research was funded by the National Cancer Institute (R35-CA274442) and the National Science Foundation (DMS 2230074 and PIPP 2200047). Data collection for this research was funded by the National Cancer Institute (P01CA154292, U01CA63736, U01CA069976). The collection of cancer and vital status data was supported in part by several state public health departments and cancer registries (https://www.bcsc-research.org/work/acknowledgement). We thank the participating women, mammography facilities, and radiologists for the data they have provided. You can learn more about the BCSC at: http://www.bcsc-research.org/. All statements in this report, including its findings and conclusions, are solely those of the authors and do not necessarily represent the views of the National Cancer Institute or the National Institutes of Health.

\bmsection*{Financial disclosure}
None reported.

\bmsection*{Conflict of interest}
The authors declare no potential conflict of interests.

\bibliography{main-ref}

\bmsection*{Supporting information}

Additional supporting information may be found in the
online version of the article at the publisher’s website.

\end{document}